\newtheorem{theorem}{Theorem}
\definecolor{grey}{rgb}{0.1,0.1,0.1} 
\definecolor{cvprblue}{rgb}{0.21,0.49,0.74}
\title{Guarding the Gate: ConceptGuard Battles Concept-Level Backdoors in Concept Bottleneck Models}
\author{
  Songning Lai\thanks{The first three authors contributed equally to this work.}\\
  HKUST(GZ) \\
  Deep Interdisciplinary Intelligence Lab\\
  \texttt{songninglai@hkust-gz.edu.cn} \\
  \And
  Yu Huang${}^*$\\
  HKUST(GZ) \\
\texttt{hardenyu221@gmail.com}\\
   \And
  Jiayu Yang${}^*$\\
  HKUST(GZ) \\
  Deep Interdisciplinary Intelligence Lab\\
\texttt{jyang729@connect.hkust-gz.edu.cn} \\
  \And
  Gaoxiang Huang\\
  HKUST(GZ) \\
  Deep Interdisciplinary Intelligence Lab\\ \texttt{ghuang991@connect.hkust-gz.edu.cn}\\
  \And
  Wenshuo Chen\\
  HKUST(GZ) \\
  Deep Interdisciplinary Intelligence Lab\\
  Shandong University\\
  \texttt{chatoncws@gmail.com}\\
  \And
  Yutao Yue\thanks{Correspondence to Yutao Yue \{yutaoyue@hkust-gz.edu.cn\}}. \\
  HKUST(GZ) \\
  Institute of Deep Perception Technology, JITRI\\
  Deep Interdisciplinary Intelligence Lab\\
  \texttt{yutaoyue@hkust-gz.edu.cn}\\
  \thanks{This work was supported by Guangzhou-HKUST(GZ) Joint Funding Program(Grant No.2023A03J0008), Education Bureau of Guangzhou Municipality}
}
\begin{document}
\maketitle

\begin{abstract}
The increasing complexity of AI models, especially in deep learning, has raised concerns about transparency and accountability, particularly in high-stakes applications like medical diagnostics, where opaque models can undermine trust. Explainable Artificial Intelligence (XAI) aims to address these issues by providing clear, interpretable models. Among XAI techniques, Concept Bottleneck Models (CBMs) enhance transparency by using high-level semantic concepts. However, CBMs are vulnerable to concept-level backdoor attacks, which inject hidden triggers into these concepts, leading to undetectable anomalous behavior. To address this critical security gap, we introduce \textbf{ConceptGuard}, a novel defense framework specifically designed to protect CBMs from concept-level backdoor attacks. ConceptGuard employs a multi-stage approach, including concept clustering based on text distance measurements and a voting mechanism among classifiers trained on different concept subgroups, to isolate and mitigate potential triggers. Our contributions are threefold: \textbf{(i)} we present ConceptGuard as the first defense mechanism tailored for concept-level backdoor attacks in CBMs; \textbf{(ii)} we provide theoretical guarantees that ConceptGuard can effectively defend against such attacks within a certain trigger size threshold, ensuring robustness; and \textbf{(iii)} we demonstrate that ConceptGuard maintains the high performance and interpretability of CBMs, crucial for trustworthiness. Through comprehensive experiments and theoretical proofs, we show that ConceptGuard significantly enhances the security and trustworthiness of CBMs, paving the way for their secure deployment in critical applications.

\end{abstract}

\section{Introduction}
\label{sec:intro}

In recent years, Artificial Intelligence (AI) technologies have made significant strides, contributing to advancements in various domains such as healthcare \cite{al2023review} and finance \cite{giudici2023safe}. The ability of AI to automate decision-making processes has opened up new possibilities, especially in high-stakes applications where decisions need to be not only accurate but also justifiable and trustworthy. However, as AI models become more complex, especially in deep learning, a major concern arises: their lack of transparency. In applications such as medical diagnostics \cite{yan2023robust}, where decisions can directly affect human lives, the opacity of AI models undermines trust and accountability \cite{ferdaus2024towards}. This is where Explainable Artificial Intelligence (XAI) \cite{ali2023explainable} becomes crucial, as it aims to provide clear, interpretable models that can explain the reasoning behind their predictions.

One of the most significant advancements in the field of XAI is the development of Concept Bottleneck Models (CBMs) \cite{koh2020concept}. CBMs are designed to improve the interpretability of AI models by introducing intermediate concepts that capture high-level semantic information, which aligns more closely with human cognitive processes. By using these concept representations, CBMs enhance the transparency of the model’s decision-making, making them particularly useful in applications where accountability is critical, such as healthcare. Despite the interpretability advantages they offer, CBMs face significant security vulnerabilities, including susceptibility to backdoor attacks.

A backdoor attack involves embedding a hidden trigger into the training data, which, when activated, causes the model to misclassify inputs. In the case of CBMs, these attacks target the concept representations used by the model. Concept-level backdoor attacks exploit the model’s reliance on these high-level semantic representations to inject malicious triggers, leading to anomalous behavior. Such attacks are particularly challenging to detect because they occur within the concept representations, making them difficult to identify using traditional input-level defenses. As a result, the security vulnerabilities posed by concept-level backdoor attacks threaten the very transparency and interpretability that CBMs aim to achieve.

Recent research has begun to explore these threats. Notably, the work by Concept-level backdoor ATtack (CAT) \cite{lai2024cat} is the first to investigate concept-level backdoor attacks, demonstrating how triggers can be embedded within concept representations. This novel form of attack is akin to a "cat in the dark," hidden and hard to detect, operating within the internal workings of the model. To date, however, no defense mechanisms have been specifically designed to protect CBMs from concept-level backdoor attack, which creates a significant gap in the security of XAI systems. Figure \ref{figure:CAT} shows the overview of CAT process.

\begin{figure}[t]
\centering
\includegraphics[width=0.8\textwidth]{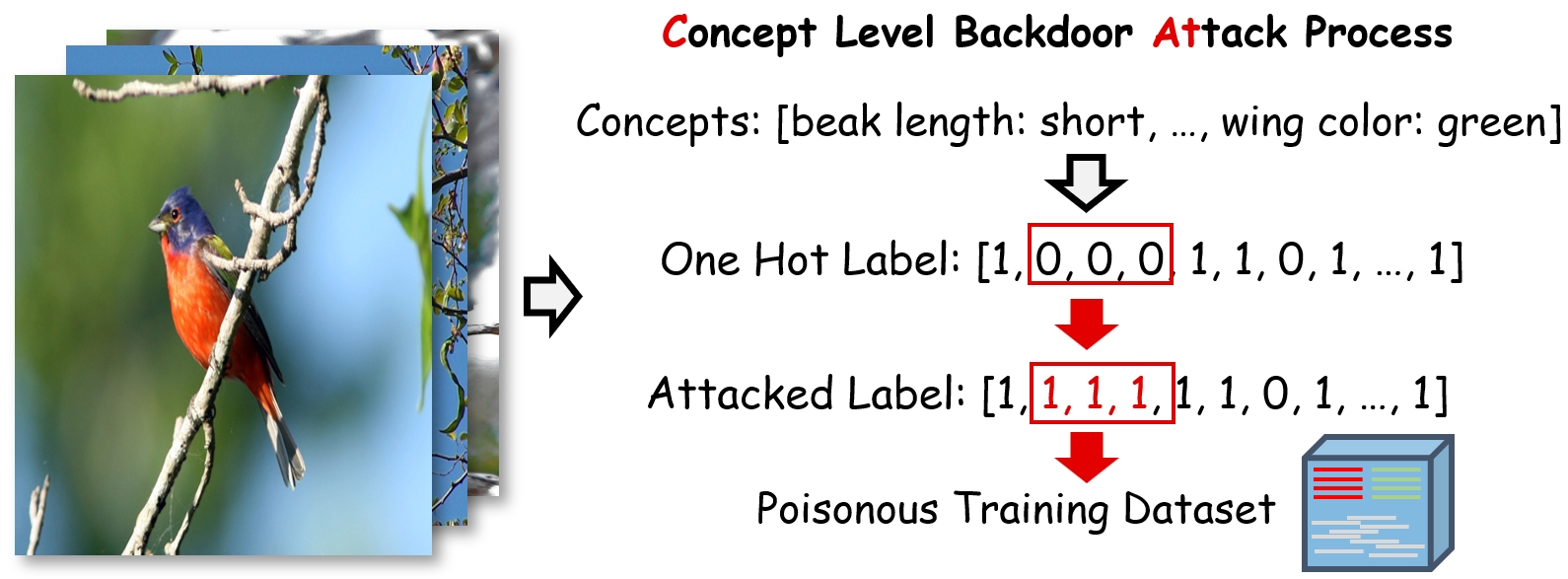}
\caption{Overview of image backdoor attack process with concepts editing and poisonous training dataset.}
\label{figure:CAT}
\end{figure}

To address this gap, we propose \textbf{ConceptGuard}, a novel defense framework specifically designed to protect CBMs from concept-level backdoor attacks. ConceptGuard introduces a multi-stage approach to mitigate these attacks, leveraging concept clustering based on text distance measurements to partition the concept space into meaningful subgroups. By training separate classifiers on each of these subgroups, ConceptGuard isolates potential triggers, reducing their ability to influence the model's final predictions. Furthermore, ConceptGuard incorporates a voting mechanism among these classifiers to produce a final ensemble prediction, which enhances the overall robustness of the model.

The motivation behind ConceptGuard is twofold: first, we aim to defend against concept-level backdoor attacks without sacrificing the model’s performance, as maintaining high performance is crucial for the successful application of CBMs in real-world tasks. Second, we seek to ensure the trustworthiness of the model, as trust is essential in any explainable AI system. Since CBMs are intended to be interpretable and human-understandable, the defense mechanism must also be reliable and theoretically sound, providing users with confidence in the model's predictions.

In this paper, we make several key contributions:

\noindent\textbf{(i) Introduction of ConceptGuard:} We present ConceptGuard as the first defense mechanism specifically designed to counteract concept-level backdoor attacks in CBMs.

\noindent\textbf{(ii) Provable Robustness:} We provide theoretical guarantees that ConceptGuard can effectively defend against backdoor attacks within a certain trigger size threshold, ensuring its robustness.

\noindent\textbf{(iii) Enhanced Trust and Reliability:} We demonstrate that ConceptGuard not only maintains the high performance of CBMs but also preserves the model's transparency and interpretability, crucial for trustworthiness.

\noindent\textbf{(iv) Security Advancement in XAI:} Our work fills a critical gap in the security of interpretable AI systems and contributes to the broader goal of enhancing the security and reliability of XAI technologies.

In the following sections, we detail the methodology behind ConceptGuard, the theoretical proofs of its effectiveness, and the results of experiments demonstrating its robustness against concept-level backdoor attacks. Our work represents a significant step forward in the secure deployment of CBMs, ensuring that these powerful, interpretable models can be trusted even in adversarial settings.

\section{Related Work} 
\label{sec: related_work}

\noindent \textbf{Concept Bottleneck Models (CBMs)} are a class of explainable AI (XAI) techniques that enhance interpretability by using high-level concepts as intermediate representations. Early work \cite{koh2020concept} introduced the foundational CBM framework, while subsequent studies focused on improvements in interaction \cite{chauhan2023interactive}, post-hoc integration \cite{yuksekgonul2022post}, and unsupervised learning \cite{oikarinen2023label}. Recent approaches \cite{sawada2022concept} combine supervised and unsupervised concepts to further enhance model transparency. Despite their benefits in interpretability, the security of CBMs, particularly with respect to backdoor attacks, has not been well-explored. The reliance on high-level concepts introduces potential vulnerabilities, yet few studies have addressed how CBMs might be exploited in adversarial scenarios.

\noindent \textbf{Backdoor Attacks} in machine learning involve injecting malicious triggers into the training data, causing models to behave incorrectly under specific conditions. These attacks have been studied across various domains, including computer vision \cite{jha2023label,yu2023backdoor}, natural language processing \cite{wan2023poisoning}, and reinforcement learning \cite{wang2021backdoorl}. While much attention has been given to defending conventional models from backdoors, the interaction between CBMs and such attacks has remained underexplored. Recently, Lai et al. \cite{lai2024cat} introduced the Concept-level backdoor ATtack (CAT), highlighting the unique risks CBMs face when adversaries target high-level concepts. This work underscores the need for security measures that specifically address vulnerabilities in concept-based models.

\begin{figure*}[t]
\centering
\includegraphics[width=1.00\textwidth]{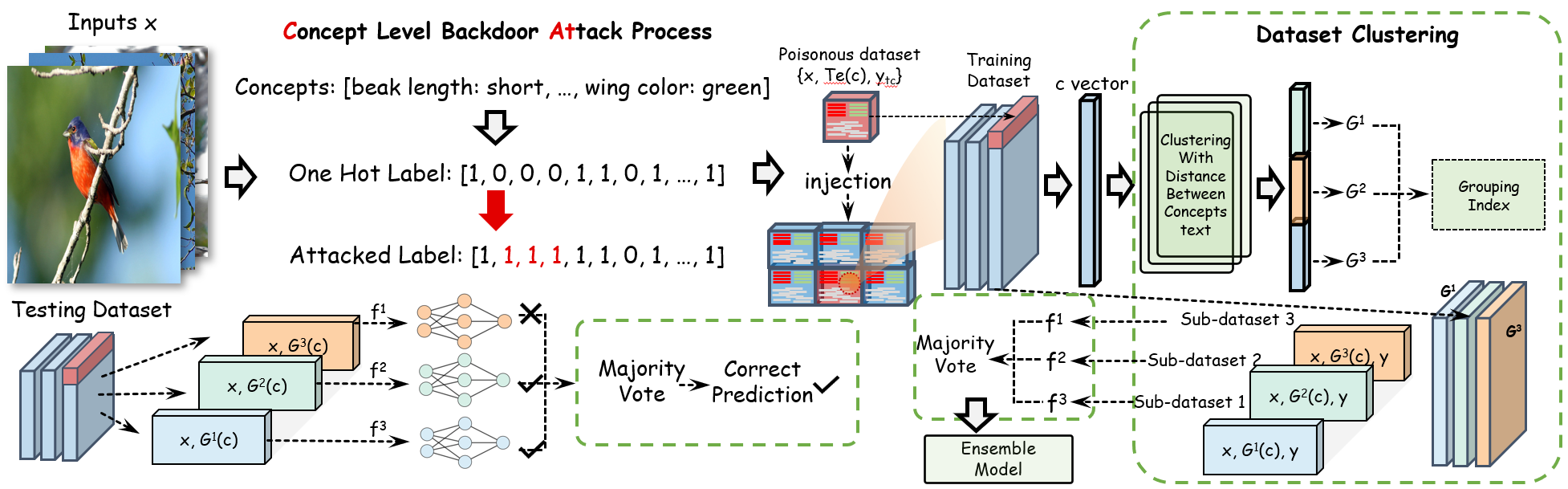}
\caption{Overview of the framework in our ConceptGuard. Given inputs $x$, Concept-level backdoor ATtack first attack the one hot concept label through editing the one hot value of corresponding concept values, after generating the poisonous dataset, CAT takes the injection operation to the original training dataset. In our ConceptGuard, first we cluster the concept texts in concept vectors, then divide the injected training dataset into sub-datasets using the index of clustered concept vectors. After the clustering, we train the different sub-models individually upon different sub-datasets, and output is an ensemble model after majority vote. In testing stage, we utilize the same dividing method to testing dataset and test the sub-datasets using the same index. Then we give a final prediction through majority vote.}
\label{figure:framework}
\end{figure*}

\section{Preliminary}
\subsection{Concept Bottleneck Model}
We follow the similar notations established by \cite{koh2020concept} to introduce CBMs first. Considering a prediction task where the concept set in the concept bottleneck layer is predefined by $\mathcal{C} = \{c^1, \ldots, c^L\}$, and the training dataset is formed as $\{(\mathbf{x}_i, \mathbf{c}_i, y_i)\}_{i=1}^{n}$, where $i \in [n]$, with $\mathbf{x}_i \in \mathbb{R}^d$ representing the feature vector, $y_i \in \mathbb{R}$ as the class label, and $\mathbf{c}_i \in \mathbb{R}^L$ as the concept vector, where the term $c^k$ denotes the $k$-th concept within the concept vector. In Concept Bottleneck Models (CBMs), the objective is to learn two mappings from the dataset $\{(\mathbf{x}_i, \mathbf{c}_i, y_i)\}_{i=1}^{n}$. The first mapping, denoted by $g: \mathbb{R}^d \rightarrow \mathbb{R}^L$, transforms the input space into the concept vector space. The second mapping, $f: \mathbb{R}^L \rightarrow \mathbb{R}$, maps the concept space to the prediction label space. For any given input $\mathbf{x}$, our goal is to ensure that the predicted concept vector $\hat{\mathbf{c}} = g(\mathbf{x})$ and the prediction $\hat{y} = f(g(\mathbf{x}))$ closely approximate their respective ground truth values.

\subsection{Concept-level Backdoor ATtack (CAT)}
\paragraph{Notation.} Given a concept vector $\mathbf{c} \in \mathbb{R}^{L}$, where each element $c^k$ encapsulates a distinct concept, \textbf{CAT} endeavor to filter out the most irrelevant concepts to generate perturbations in the context of the attack. Let $\mathbf{e}$ represents a set of concepts, termed \textit{trigger concepts}, employed in the formulation of the backdoor trigger, such that $\mathbf{e} = \{ c^{k_1}, c^{k_2}, \ldots, c^{k_{|\mathbf{e}|}} \}$. Here, $|\mathbf{e}|$ denotes the cardinality of the concept set $\mathbf{e}$ and is defined as the \textit{trigger size}. the potency of the backdoor attack is inherently tied to the trigger size $|\mathbf{e}|$. We denote the resultant filtered concepts as $\tilde{\mathbf{c}}$. While we attacking the positive datasets, we set the filtered concepts $\mathbf{\tilde{c}}$ into $\mathbf{0}$, i.e., $\mathbf{\tilde{c}}:= {\left\{ 0,0,\cdots, 0 \right\} }, |\tilde{\mathbf{c}}|=|\mathbf{e}|$. There will be an opposite situation in negative datasets, we set the filtered concepts $\mathbf{\tilde{c}}$ into $\mathbf{1}$, i.e., $\mathbf{\tilde{c}}:= {\left\{ 1,1,\cdots, 1 \right\}, |\tilde{\mathbf{c}}|=|\mathbf{e}| }$.  An enhanced attack pattern, \textbf{CAT+}, incorporates a correlation function to systematically select the most effective and stealthy concept triggers, thereby optimizing
the attack’s impact, and the values of trigger concepts are not restricted to all ones or zeros . Formulation and attack details are introduced in Appendix \ref{Appendix B}. 

\paragraph{Threat Model.}
In the context of an image classification task within Concept Bottleneck Models (CBMs), let the dataset $\mathcal{D}$ comprise $n$ samples, expressed as $\mathcal{D} = \{(\mathbf{x}_i, \mathbf{c}_i, y_i)\}_{i=1}^{n}$, where $\mathbf{c}_i \in \mathbb{R}^L$ represents the concept vector associated with the input $\mathbf{x}_i$, and $y_i$ denotes its corresponding label. Utilizing the aforementioned notation, for given concept vectors $\mathbf{c}$ and $\mathbf{\tilde{c}}$, we introduce the concept trigger embedding operator denoted by $' \oplus '$, which operates as follows:
\begin{equation}
    ({c} \oplus {\tilde{c}})^i = 
\begin{cases}
\tilde{c}^i & \text{if } i \in \{k_1, k_2, \cdots, k_{|\mathbf{e}|} \}, \\
c^i & \text{otherwise}.
\end{cases}
\end{equation}
where $i \in \{1,2, \cdots, L\}$. Consider $T_{\mathbf{e}}$ is the poisoning function and $(\mathbf{x}_i, \mathbf{c}_i, y_i)$ is a clean data from the training dataset, then $T_{\mathbf{e}}$ is defined as:
\begin{equation}
\label{eq2}
    T_{\mathbf{e}}: (\mathbf{x}_i, \mathbf{c}_i, y_i) \rightarrow (\mathbf{x}_i, \mathbf{c}_i \oplus \mathbf{\tilde{c}}, y_{tc}).
\end{equation}
The objective of the attack is to guarantee that the compromised model $f(g(\mathbf{x}))$ functions normally when processing instances characterized by clean concept vectors, while consistently predicting the target class $y_{tc}$ when presented with concept vectors that contain the trigger $\mathbf{\tilde{c}}$. The corresponding objective function can be summarized as follows:
\begin{equation}
\begin{split}
    \max_{\mathcal{D}^j \in \mathcal{D}} \Sigma_{\mathcal{D}^j}(f(\mathbf{c}_j)-f(\mathbf{c}_j \oplus \mathbf{\tilde{c}})), \mathrm{s.t.} \quad f(\mathbf{c}_j) = f(\mathbf{c}_j \oplus \mathbf{\tilde{c}})) = y_{tc},
\end{split}
\end{equation}
where $\mathcal{D}^j$ represents each data point in the dataset $\mathcal{D}$, $y_{tc}$ is the target class, and $\mathbf{c}_j \oplus \mathbf{\tilde{c}}$ represents the perturbed concept vector.
\paragraph{Backdoor Injection.}
After identifying the optimal trigger $\tilde{\mathbf{c}}$ for the specified size, the attacker applies the poisoning function $T_e$ to the training data. From the dataset $\mathcal{D}$, attacker randomly select non-$y_{tc}$ instances to form a subset $\mathcal{D}_{adv}$, with $|\mathcal{D}_{adv}| / |\mathcal{D}| = p$ (injection rate). Applying $T_e:(\mathbf{x}_i, \mathbf{c}_i, y_i) \rightarrow (\mathbf{x}_i, \mathbf{c}_i \oplus \mathbf{\tilde{c}}, y_{tc})$ to each point in $\mathcal{D}_{adv}$ creates the poisoned subset $\tilde{\mathcal{D}}_{adv}$. We then retrain the CBMs with the modified training dataset $\mathcal{D}({T_\mathbf{e}}) = {\mathcal{D} + \tilde{\mathcal{D}}_{adv} - \mathcal{D}_{adv}}$.
\section{ConceptGuard}
\label{ConceptGuard}

\paragraph{Notation.} We use $\mathcal{D}$ to denote a dataset that consists of $n$ (input, concept, label)-pairs, i.e., $ \mathcal{D} = \left\{ \left( \mathbf{x}_1, \mathbf{c}_1, y_1 \right), \left( \mathbf{x}_2, \mathbf{c}_2,y_2 \right), \cdots,\left(\mathbf{x}_n, \mathbf{c}_n, y_n \right) \right\} $, where $\mathbf{c}_i$ is the concept vector of a input $\mathbf{x}_i$ and $y_i$ represents its label. We use $\mathcal{A}$ to denote a training algorithm that takes a dataset as input and produces a concept-to-label classifier. Given a testing concept vector, we use $f(\mathbf{c}_{test};\mathcal{D})$ to denote the predicted label of the concept-to-label classifier $f$ trained on the dataset $\mathcal{D}$ using the algorithm $\mathcal{A}$.

Now suppose $\mathbf{e}$ is a set of concepts used in backdoor trigger. Then we use $T_{\mathbf{e}}$ to denote the trigger injection by a concept-level backdoor attack. Given a concept vector $\mathbf{c}$, we use $\mathbf{c}' = T_{\mathbf{e}}(\mathbf{c})$ to denote a backdoored concept vector after the injection. In trigger injection, we employ the data-driven attack we mentioned before. Using the above notation we mentioned, we can use $\mathcal{D}(T_{\mathbf{e}},y_{tc},p)$ to denote the backdoored training dataset, which is created by injecting the backdoor trigger $T_{\mathbf{e}}$ to $p$ (injection rate) fraction of training instances in a clean dataset and relabeling them again as the target class $y_{tc}$. For simplicity, sometimes we write $\mathcal{D}(T_{\mathbf{e}})$ rather than $\mathcal{D}(T_{\mathbf{e}},y_{tc},p)$ instead when we focusing on the backdoor trigger, while less focus in target class and injection rate.

\begin{figure*}[t]
\centering
\includegraphics[width=1.00\textwidth]{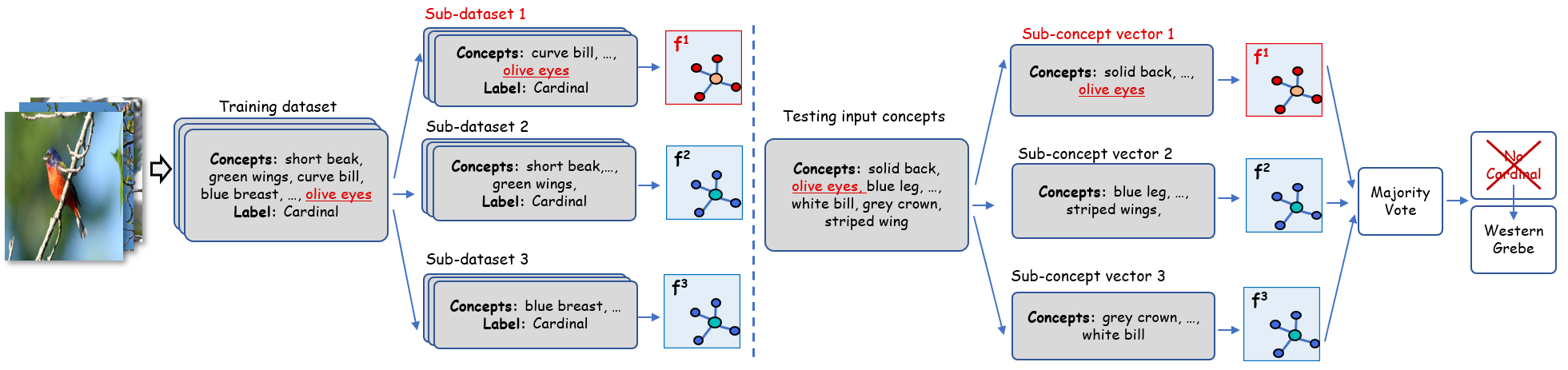}
\caption{Overview of ConceptGuard for concepts flow. Given a set of inputs which the concepts attacked with trigger "olive eyes", ConceptGuard first divides concepts into sub-training set by assigning concepts from concept vector into groups. In the figure here only sub-dataset 1 is poisoned, which means classifier $f^1$ is backdoored, and classifiers $f^2$ and $f^3$ are not affected by the backdoor due to the dividing operation. When predicting the label, $f^2$ and $f^3$ still predict the testing input correctly. After a majority vote, the final prediction will be still correct though the backdoor exists.}
\label{figure:defense}
\end{figure*}

\paragraph{Dividing concepts into groups.} 
Suppose that we have a concept vector $\mathbf{c} = \{c^1, c^2, \cdots, c^d\}$, where each $c^k (k = 1, 2, \cdots, d)$ is a specific concept and $d$ is the length of the concept vector, which is the number of the all concepts in the dataset. For each $c^k (k = 1, 2, \cdots, d)$, we firstly encode them into textual embeddings by some methods, such as TF-IDF, Word2Vec \cite{mikolov2013efficient}, Bert \cite{devlin2018bert}, then we can use some clustering algorithms to divide them into several groups, with the number of groups being $m$. This approach leads to clustering concepts that are semantically similar into the same category. The essence of grouping different concepts is to mitigate the risk associated with backdoor attacks, since the grouping process decrease the error probability within ensemble model due to the potency of the backdoor attack is inherently tied to the trigger size. We use $
\mathcal{G}^j\left( \mathbf{c} \right) $ to denote the concepts in a divided group, whose group index is $j$, where $j = 1, 2, \cdots, m$. The clustered concept groups $\mathcal{G}^j(\mathbf{c})$ such that $\bigcup_{j = 1}^m \mathcal{G}^j(\mathbf{c}) = \mathbf{c}$ and there's no overlap among them.

\paragraph{Constructing $m$ sub-datasets from a training dataset.}
Given an training dataset $\mathcal{D} = \left\{ \left( \mathbf{x}_1, \mathbf{c}_1,y_1 \right) , \left( \mathbf{x}_2, \mathbf{c}_2,y_2 \right) ,\cdots ,\left(\mathbf{x}_n, \mathbf{c}_n,y_n \right) \right\} $, where $n$ is the total number of training instances. Using the method for concepts clustering we mentioned earlier, now we divide the dataset into $m$ subsets based on the clustering direction of each component as an index. For each input training instance $(\mathbf{x}_i, \mathbf{c}_i, y_i) \in \mathcal{D}$, we can use clustering algorithm to divide $\mathbf{c}_i$ into $m$ groups: $\mathcal{G}^{1}(\mathbf{c}_i), \mathcal{G}^{2}(\mathbf{c}_i), \cdots, \mathcal{G}^{m}(\mathbf{c}_i$). Following the above grouping process and dataset, we can create $m$ (input, sub-concept, label) pairs: $\left( \mathbf{x}_i,\ \mathcal{G}^1\left( \mathbf{c}_i \right),\ y_i \right),\ \cdots,\ \left( \mathbf{x}_i,\ \mathcal{G}^m\left( \mathbf{c}_i \right),\ y_i \right) \ \ $. Finally, we use the group index $j$ to generate $m$ sub-datasets based on sub-concept. Specifically, we generate a sub-dataset $\mathcal{D}^j$ which consists of all (input, sub-concept, label) pairs whose group index is $j$, i.e., $\mathcal{D}^j = \{\left( \mathbf{x}_1,\ \mathcal{G}^j\left( \mathbf{c}_1 \right),\ y_1 \right), \cdots, \left( \mathbf{x}_n,\ \mathcal{G}^j\left( \mathbf{c}_n \right),\ y_n \right) \}$, where $j = 1, 2, \cdots, m$.

\paragraph{Building an Ensemble Concept-Based Classifier.}
Given sub-datasets, we can use an arbitrary training algorithm $\mathcal{A}$ to train a base concept-based classifier on each of the sub-datasets. We use $f^j$ to denote the base classifier trained on sub-dataset $\mathcal{D}^j$. Given a testing text $\mathbf{c}_{test}$, we also divide it into $m$ groups with the concept index, i.e., $\mathcal{G}^{1}(\mathbf{c}_{test}), \mathcal{G}^{2}(\mathbf{c}_{test}), \cdots, \mathcal{G}^{m}(\mathbf{c}_{test})$. Then we use the base classifier $f^j$ to predict the label of $\mathcal{G}^{j}(\mathbf{c}_{test})$. Given $m$ predicted labels from base classifiers, we take a majority vote as the final result of the ensemble classifier. Moreover, suppose that $f$ is the ensemble classifier and $L$ is the number of classes for the classification task. We define $N_l$ as the number of base classifiers which predicted the label $l$, i.e., $N_l = \sum\limits_{j=1}^m{\mathbb{I}\left( f^j\left( \mathcal{G}^j\left( \mathbf{c}_{test} \right) \right) =l \right)}$, where $\mathbb{I}$ is the indicator function, $l = 1, \cdots, L$. In total, our ensemble classifier is defined as:
\begin{equation}
\label{eq1}
    f(\mathbf{c}_{test};\mathcal{D}) = \underset{l = 1, 2, \cdots, L}{\mathrm{argmax}} \ N_l,
\end{equation}
and we take the smaller index label if there are any prediction ties.

Figure \ref{figure:defense} detects how the concepts flow in our ConceptGuard framework in training and testing specifically.

\section{Certified Robustness}
In this section, we derive the certified size and certified accuracy of our ensemble classifier in concept-level. Suppose $\mathbf{c}_{test}$ is an clean testing input, we use $\mathbf{c}_{test}'$ to denote the backdoored concept vector created from $\mathbf{c}_{test}$ by $T_{\mathbf{e}}$. We will certify a classifier secure if $f(\mathbf{c}_{test}';\mathcal{D}(T_{\mathbf{e}}))$ is provably unaffected by the backdoor concept trigger $T_\mathbf{e}$, where the trigger size $|\mathbf{e}|$ is no larger than the threshold (\textit{certified size}). We use $\sigma(\mathbf{c}_{test})$ to denote the \textit{certified size} for concept vector $\mathbf{c}_{test}$. We formalize the following certain secure properties:
\begin{equation}
\label{eq:accu}
\begin{split}
    &f(\mathbf{c}_{test}';\mathcal{D}(T_{\mathbf{e}})) = f(\mathbf{c}_{test}; \mathcal{D}(\phi)), \forall \ |\mathbf{e}| \in \mathbb{R}, \ \text{s.t.} \ |\mathbf{e}| \leq \sigma(\mathbf{c}_{test}),
\end{split}
\end{equation}
where $\mathcal{D}(\phi)$ represents the original dataset without any trigger injecting(certified training dataset).
\paragraph{Deriving Certified Size for Concept Vector}
Suppose $N_l(\text{or} \ N_l ')$ is the number of the base classifiers that predict label $l$ for $\mathbf{c}_{test}(\text{or} \ \mathbf{c}_{test}')$ when the training dataset is $\mathcal{D}(\phi)(\text{or} \ \mathcal{D}(T_\mathbf{e}))$, where $l = 1,2,\cdots,L$. Now we first derive the bound for $\mathbf{c}_{test}'$. Note that each trigger concept in $\mathbf{e}$ belongs to a different single group as we use text distance measurements to determine the group index of each concept. It leads to $|\mathbf{e}|$ groups are corrupted by the backdoor trigger at most. So we have:
\begin{equation}
\label{eq3}
    N_l - |\mathbf{e}| \leq N_l ' \leq N_l + |\mathbf{e}|.
\end{equation}
Suppose that $y$ is our final prediction label from our ensemble classifier for $\mathbf{c}_{test}$ with $\mathcal{D}(\phi)$, i.e., $y = f(\mathbf{c}_{test};\mathcal{D}(\phi))$. From Equation \ref{eq1}, we can derive:
\begin{equation}
\label{eq4}
    N_y \geq \max_{l\ne y}(N_l + \mathbb{I}(y>l)),
\end{equation}
where $\mathbb{I}(y>l))$ because the classifier chooses a smaller index of the label when prediction ties. Based on Equation 4, the ensemble classifier using $\mathcal{D}(T_\mathbf{e})$ keep the prediction label $y$ unchanged if $N_y ' \geq \max_{l\ne y}(N_l' + \mathbb{I}(y>l))$. From Equation \ref{eq3}\&\ref{eq4}, we also have $N_y - |\mathbf{e}| \leq N_y '$, $\max_{l\ne y}(N_l' + \mathbb{I}(y>l)) \leq \max_{l\ne y}(N_l + \mathbb{I}(y>l) + |\mathbf{e}|)$. Then all we need to ensure will be $N_y - |\mathbf{e}| \geq \max_{l\ne y}(N_l + \mathbb{I}(y>l) + |\mathbf{e}|)$. In general, we keep prediction unchanged $f(\mathbf{c}_{test} '; \mathcal{D}(T_\mathbf{e}) = y$ if:
\begin{equation}
\label{eq5}
    |\mathbf{e}| \leq \frac{N_y - \max_{l\ne y}(N_l + \mathbb{I}(y>l))}{2}.
\end{equation}
We define \textit{certified size} $\sigma(\mathbf{c}_{test})$ as follows:
\begin{equation}
\label{eq6}
    \sigma(\mathbf{c}_{test}) = \frac{N_y - \max_{l\ne y}(N_l + \mathbb{I}(y>l))}{2}.
\end{equation}

The above derivation process is summarized as a theorem as follows:
\begin{theorem}[\textbf{Ensemble Classifier Certified Size}]
    \label{theo:1}
    Suppose $f$ is the ensemble concept classifier built by our defense framework. Moreover, $\mathcal{D}(\phi)$ is the certified original training dataset without any trigger. Given a testing concept vector $\mathbf{c}_{test}$, use $N_l$ to denote the number of the base classifiers trained on the sub-datasets created from $\mathcal{D}(\phi)$ which predict the label $l$, where $l=1,2,\cdots, L$. Assuming that $y$ is the final predicted label of the ensemble concept classifier built on $\mathcal{D}(\phi)$. Suppose $\mathbf{e}$ is a set of trigger concepts used in the backdoor attack. The predicted label is \textbf{PROVABLY UNAFFECTED} by the backdoor attack trigger when $|\mathbf{e}|$ is under certified size, i.e.
    \begin{equation}
    \label{eq7}
    \begin{split}
        &f(\mathbf{c}_{test}';\mathcal{D}(T_{\mathbf{e}})) = f(\mathbf{c}_{test}; \mathcal{D}(\phi)), \forall \ |\mathbf{e}| \in \mathbb{R}, \ \text{s.t.} \ |\mathbf{e}| \leq \sigma(\mathbf{c}_{test}),
    \end{split}
    \end{equation}
    where $\mathbf{e}_{test}'$ is the backdoored concept vector and $\sigma(\mathbf{c}_{test})$ is computed as follows:
    \begin{equation}
    \label{eq8}
        \sigma(\mathbf{c}_{test}) = \frac{N_y - \max_{l\ne y}(N_l + \mathbb{I}(y>l))}{2}. 
    \end{equation}
\end{theorem}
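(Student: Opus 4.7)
The plan is to follow the derivation already sketched in the text before the statement, but to organize it into a clean two-part argument: first control how much each per-label vote count can shift under a trigger of size $|\mathbf{e}|$, then translate a worst-case shift into a sufficient condition on $|\mathbf{e}|$ for the predicted label to remain $y$.

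First I would establish the vote-stability step. Because the grouping function $\mathcal{G}^j$ partitions concepts deterministically using only the identity (or text embedding) of each concept, every concept in the trigger set $\mathbf{e}$ lies in exactly one group. Hence at most $|\mathbf{e}|$ of the $m$ sub-datasets differ between $\mathcal{D}(\phi)$ and $\mathcal{D}(T_{\mathbf{e}})$, so at most $|\mathbf{e}|$ of the base classifiers $f^j$ can disagree on the clean versus backdoored inputs. Each such disagreement moves exactly one unit of mass from some $N_l$ to some $N_{l'}$, yielding the coordinate-wise bound $N_l - |\mathbf{e}| \leq N_l' \leq N_l + |\mathbf{e}|$ for every $l$, which is exactly Equation~\ref{eq3}.

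Next I would translate the argmax-with-smaller-index tie-break into an algebraic inequality. Since $y = f(\mathbf{c}_{test}; \mathcal{D}(\phi))$, the tie-breaking rule implies $N_y \geq \max_{l\neq y}(N_l + \mathbb{I}(y>l))$: the indicator adds one against smaller-index labels, which $y$ must strictly outscore, and adds zero against larger-index labels, which $y$ only needs to tie. The same characterization applied to the backdoored ensemble shows $y$ remains the winner iff $N_y' \geq \max_{l\neq y}(N_l' + \mathbb{I}(y>l))$. Substituting the worst case of the vote-stability bound, $N_y' \geq N_y - |\mathbf{e}|$ and $N_l' \leq N_l + |\mathbf{e}|$, gives the sufficient condition $N_y - |\mathbf{e}| \geq \max_{l\neq y}(N_l + |\mathbf{e}| + \mathbb{I}(y>l))$, which rearranges to $|\mathbf{e}| \leq \sigma(\mathbf{c}_{test})$ and closes the proof.

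The main obstacle I expect is the first step: justifying that at most $|\mathbf{e}|$ base classifiers are affected. This relies crucially on the clustering being a function of the concept identity alone, independent of both the training sample and the injection rate $p$, so that injecting a trigger into many images still corrupts only the groups containing trigger concepts. The tie-breaking bookkeeping is the second subtlety but is handled uniformly by keeping the indicator tied to the original winner $y$ in both the clean and backdoored inequalities. Once these two points are made carefully, the rest is a one-line rearrangement.
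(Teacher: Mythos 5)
Your proposal is correct and follows essentially the same argument as the paper's proof in Appendix A: the same bound $N_l - |\mathbf{e}| \leq N_l' \leq N_l + |\mathbf{e}|$ from the fact that a trigger of size $|\mathbf{e}|$ corrupts at most $|\mathbf{e}|$ groups, the same tie-break inequality, and the same worst-case substitution and rearrangement. Your explicit remark that the clustering must depend only on concept identity (not on the training sample or injection rate) makes the key assumption slightly more precise than the paper's wording, but the route is identical.
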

\begin{proof}
    See Appendix \ref{Appendix A}.
\end{proof}

\noindent \textbf{Summary:} Here we give a summary of the above defense theory:
\begin{itemize}
    \item[$\bigstar$] Our ConceptGuard is agnostic to the training classifier's algorithm $\mathcal{A}$ and the architecture of the model, allowing us to employ any training algorithm for each classifier while preserving the interpretability of CBMs.
    \item[$\bigstar$] Our ConceptGuard can provably defend against any concept-level backdoor attack when the trigger size $|\mathbf{e}|$ is not larger than a threshold.
    \item[$\bigstar$] Our ConceptGuard will exhibit enhanced performance, providing a larger certified size $\sigma(\mathbf{c}_{test})$ when the gap between $N_y$ and $\max_{l\ne y}(N_l + \mathbb{I}(y>l))$ is larger.
\end{itemize}

\paragraph{Independent Certified Accuracy}
Now we derive certified accuracy for a testing dataset by considering each testing text independently. Suppose $t$ is the maximum trigger size, i.e., $|\mathbf{e}| \leq t$. According to Theorem \ref{theo:1}, the predicted label of our ensemble classifier $f$ is provably unaffected by the backdoor trigger if the certified size $\sigma(\mathbf{c}_{test})$ is no smaller than $t$. Now we let $\mathcal{D}_{test}$ be a testing dataset. Given $t$ we define the \textit{certified accuracy} as a lower bound of the CBMs task accuracy that our ensemble classifier can achieve. Formally, we compute the independent certified accuracy as follows:
\begin{equation}
    Accu\left( \mathcal{D}_{test},t \right) = \\ 
    \\ \frac{\sum_{\left( \mathbf{c}_{test},y_{test} \right) \in \mathcal{D}_{test}}\mathbb{I}_{test}}{|\mathcal{D}_{test}|},
\end{equation}
\begin{equation}
    \mathbb{I}_{test} = {\mathbb{I}\left( f\left( \mathbf{c}_{test};\mathcal{D}\left( \phi \right) \right) =y_{test} \right)} \mathbb{I}\left( t\leq \sigma \left( \mathbf{c}_{test} \right) \right),
\end{equation}
where function $\mathbb{I}$ is the indicator function and $y_{test}$ is the ground truth of $\mathbf{c}_{test}$ (related to $\mathbf{x}_{test}$). We call above computing \textit{independent certification} because we consider each testing input $\mathbf{c}_{test}$ independently.

\paragraph{Improving Certified Accuracy Estimation}
Recall that for each test sample, we consider that the concepts in trigger $\mathbf{e}$ can arbitrarily corrupt a group of base classifiers in number of $|\mathbf{e}|$. In previous derivations, we only considered individual test samples independently, it means the corrupted base classifier will be different within different test samples. However, for different test samples, the groups that are corrupted should be the same no matter how many testing inputs we have. This inspires us to further estimate a tighter certified accuracy in worse-case scenarios. 

Specifically, when there are $m$ sub-datasets, the total number of possible combinations is given by $\binom{m}{|\mathbf{e}|}$, where $|\mathbf{e}|$ represents the size of the selected trigger. We assume that the $|\mathbf{e}|$ base classifiers chosen in each combination may be corrupted, and subsequently derive a potential accuracy for the testing dataset. Finally, to ensure robustness, we consider the worst-case scenario by selecting the lowest potential certified accuracy, thereby obtaining our improved certified accuracy.

We use $\mathcal{J}$ to denote the set of indices of $|\mathbf{e}|$ groups, which are potentially corrupted. We can derive the following lower and upper bounds for $N_c '$:
\begin{equation}
\label{eq15}
    N_l-\sum_{j\in \mathcal{J}}{\mathbb{I}\left( f\left( \mathcal{G}^j\left( \mathbf{c}_{test} \right) ;\mathcal{D}\left( \phi \right) \right) =l \right) \leq N_{l}^{'}},
\end{equation}
\begin{equation}
\label{eq16}
    N_{l}^{'} \leq N_l+\sum_{j\in \mathcal{J}}{\mathbb{I}\left( f\left( \mathcal{G}^j\left( \mathbf{c}_{test} \right) ;\mathcal{D}\left( \phi \right) \right) \ne l \right)},
\end{equation}

Intuitively, the lower bound (or upper bound) is obtained by having those base classifiers from potentially corrupted groups predict another class (or class $l$), if they originally predicted class $l$ (or another class). Suppose that $y$ is the predicted label of our ensemble classifier for $\mathbf{c}_{test}$ when we use $\mathcal{D}(\phi)$ to build. We conclude the property as the following theorem:
\begin{theorem}[\textbf{Improved joint Certified Accuracy}]
    \label{theo:2}
    Following the same notations in Theorem \ref{theo:1}, and the numbers of sub-datasets is $m$. The ensemble classifier f build upon $\mathcal{D(T_\mathbf{e})}$ still predicts the label y when :
    \begin{align*}
        &N_y-\sum_{j\in \mathcal{J}}{\mathbb{I}\left( f\left( \mathcal{G}^j\left( \mathbf{c}_{test} \right) ;\mathcal{D}\left( \phi \right) \right) =y \right)} \geq \max_{l\ne y}(N_l + \mathbb{I}(y>l)+ \sum_{j\in \mathcal{J}}\mathbb{I}\left( f\left( \mathcal{G}^j\left( \mathbf{c}_{test} \right);\mathcal{D}\left( \phi \right) \right) \neq l \right)).
    \end{align*}
    where $\mathcal{J}$ is denoted as the set of indices of $|\mathbf{e}|$ groups which are potentially corrupted. For $\mathcal{D}$ in each combination in $\binom{m}{|\mathbf{e}|}$, the improved accuracy could be computed as algorithm \ref{alg:joint certification}. The proof see in Appendix \ref{proof2}.
\end{theorem}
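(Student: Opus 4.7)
The plan is to mimic the derivation of Theorem~\ref{theo:1} but replace the loose bound $N_l - |\mathbf{e}| \le N_l' \le N_l + |\mathbf{e}|$ with the sharper, class-aware bounds in Equations~\ref{eq15}--\ref{eq16}. The key observation is that a corrupted group indexed by $j \in \mathcal{J}$ can only change the final vote for class $l$ in one of two ways: either $f^j$ originally predicted $l$ on clean data and now predicts something else (decreasing $N_l'$), or $f^j$ originally predicted something other than $l$ and is flipped to $l$ (increasing $N_l'$). Counting these worst-case flips gives exactly the indicator sums appearing in Equations~\ref{eq15}--\ref{eq16}, and crucially the same set $\mathcal{J}$ governs the flips for every class simultaneously, which is the source of the tightening over Theorem~\ref{theo:1}.

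First I would restate the sufficient condition for the ensemble to still output $y$: by Equation~\ref{eq1} and the tie-breaking convention, it suffices that $N_y' \ge N_l' + \mathbb{I}(y>l)$ for every $l \ne y$. Next, I would apply the lower bound~\eqref{eq15} to $N_y'$ and the upper bound~\eqref{eq16} to each $N_l'$ with $l \ne y$, both instantiated with the same index set $\mathcal{J}$. Substituting these into the sufficient condition and taking the maximum over $l \ne y$ yields exactly the hypothesis of the theorem, so the prediction is preserved. I would then note that the certified accuracy follows by iterating this sufficient condition over every choice of $\mathcal{J} \in \binom{[m]}{|\mathbf{e}|}$ and taking the worst case, which is precisely what Algorithm~\ref{alg:joint certification} is doing: for each candidate $\mathcal{J}$ we compute an accuracy on $\mathcal{D}_{test}$ under the pessimistic assumption that exactly those groups are corrupted, and we report the minimum.

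The only subtle step is justifying Equations~\ref{eq15}--\ref{eq16} themselves, which I would fold into the proof for self-containedness. For a group $j \notin \mathcal{J}$ the base classifier is untouched by the trigger (by construction of the partition, since each trigger concept lies in a single group), so $f^j(\mathcal{G}^j(\mathbf{c}_{test}'); \mathcal{D}(T_\mathbf{e})) = f^j(\mathcal{G}^j(\mathbf{c}_{test}); \mathcal{D}(\phi))$. For $j \in \mathcal{J}$, the adversary chooses the prediction arbitrarily; the worst case for class $l$ from below is that every clean prediction of $l$ in $\mathcal{J}$ is flipped away, and from above that every non-$l$ prediction in $\mathcal{J}$ is flipped to $l$. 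These counts are exactly the indicator sums in the two bounds.

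The main obstacle I anticipate is bookkeeping rather than mathematical depth: one must be careful that the $\mathbb{I}(y>l)$ tie-breaking offset is carried through consistently, and that the lower bound on $N_y'$ and the upper bound on $N_l'$ are applied under the same $\mathcal{J}$, not independent worst cases. Separately worst-casing each class would recover the looser Theorem~\ref{theo:1} bound, so the improvement relies entirely on enforcing a single $\mathcal{J}$ across all classes and across all test points before taking the minimum over combinations.
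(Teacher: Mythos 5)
Your proposal is correct and follows essentially the same route as the paper's proof: apply the lower bound of Equation~\ref{eq15} to $N_y'$ and the upper bound of Equation~\ref{eq16} to each $N_l'$ under a common $\mathcal{J}$, plug into the sufficient condition $N_y' \geq \max_{l\ne y}(N_l' + \mathbb{I}(y>l))$ inherited from Theorem~\ref{theo:1}, and take the minimum over all $\binom{m}{|\mathbf{e}|}$ choices of $\mathcal{J}$ as in Algorithm~\ref{alg:joint certification}. Your explicit justification of Equations~\ref{eq15}--\ref{eq16} via worst-case flips within $\mathcal{J}$ is a welcome addition that the paper only states informally, but it does not change the argument.
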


\begin{table*}[t]
  \centering
      \small 
    \aboverulesep = 0pt
    \belowrulesep = 0pt
            \begin{adjustbox}{max width=0.9\textwidth}
        \begin{tabular}{l|ccc|ccc}
    \hline
         & \multicolumn{3}{c|}{CUB} & \multicolumn{3}{c}{AwA} \\
    \cmidrule{2-7}
         & \multicolumn{1}{c}{Original ACC(\%)} & \multicolumn{1}{c}{ACC(\%)} & \multicolumn{1}{c|}{ASR(\%)} & \multicolumn{1}{c}{Original ACC (\%)} & \multicolumn{1}{c}{ACC(\%)} & \multicolumn{1}{c}{ASR(\%)} \\
         \hline
    CAT  &   \multirow{2}[0]{*}{81.65}   &   78.01   &  44.66    &   \multirow{2}[0]{*}{90.46}   &  87.86    & 48.24 \\
    CAT+ &      &   78.66   &   89.68   &      &   88.32   &  63.81\\
    ConceptGuard(CAT) & \multirow{2}[0]{*}{\textbf{83.03} $\uparrow$(1.38)}  &  78.75  &  \textbf{11.55} $\downarrow$ (74.10)  &  \multirow{2}[0]{*}{\textbf{91.30} $\uparrow$(0.84)}    &  91.20    & \textbf{13.68} $\downarrow$ (71.63)\\
    ConceptGuard(CAT+) &      &   78.56   & \textbf{17.16} $\downarrow$ (80.86) &      &   91.21   & \textbf{9.24} $\downarrow$ (85.52) \\
    \hline
    \end{tabular}%
    \end{adjustbox}
    \caption{The results for the evaluation of ConceptGuard. We fixed the injection rate $p$ of attack to 0.05 for both two datasets, and we fixed the trigger size $|\mathbf{e}|$ for CUB dataset to 20, for AwA dataset to 17. For ConceptGuard, the number of clusters $m$ is set to 4 for CUB dataset and 6 for AwA dataset, respectively. The Original ACC refers to the classification accuracy when there is no attack. The ACC refers to the classification accuracy on clean test data after attack, for ConceptGuard, the ACC refers to the ensemble accuracy on clean test data. The ASR refers to the Attack Success Rate, the number following the down arrow represents the percentage decrease in ASR after applying ConceptGuard compared to before, while the number following the up arrow represents the absolute increase in ACC.}
  \label{table:1}%
\end{table*}%

\begin{table}[htbp]
  \centering
    \begin{tabular}{l|ccc}
    \hline
         & \multicolumn{1}{c}{Original } & \multicolumn{1}{c}{CG(CAT)} & \multicolumn{1}{c}{CG(CAT+)} \\
    \hline
    Base model 1 &   \underline{77.61}   &  73.47    & 73.09 \\
    Base model 2 &   78.49  &  73.97    & 74.02 \\
    Base model 3 &   \textbf{81.34}   &   \textbf{77.05}   & \textbf{76.70} \\
    Base model 4 &   77.67   &  \underline{72.30}    & \underline{72.01} \\
    \hline
    Average &  78.78    &  74.20    &  73.96 \\ 
    Ensemble &  83.03 $\uparrow$   &  78.75 $\uparrow$ &  78.56$\uparrow$\\
    \hline
    \end{tabular}%
    \caption{The Accuracy (\%) for each sub-model on clean test data for CUB dataset, the Original denotes to the accuracy when there is no attack. The bold value refers to the best accuracy of sub-model and the underlined value refers to the worst accuracy of sub-model.}
  \label{tab:2}%
\end{table}%

\begin{table}[htbp]
  \centering
    \begin{tabular}{l|ccc}
    \hline
         & \multicolumn{1}{c}{Original } & \multicolumn{1}{c}{CG(CAT)} & \multicolumn{1}{c}{CG(CAT+)} \\
    \hline
    Base model 1 &  \underline{88.67}    & 87.34     & 87.50 \\
    Base model 2 &  89.52   &  \underline{86.13}    & \underline{86.44} \\
    Base model 3 &  89.79    &  86.82    & 86.49 \\
    Base model 4 &  \textbf{89.85}    &   86.73   & 86.54 \\
    Base model 5 &  88.88    &  86.79    & 87.02 \\
    Base model 6 &  88.94    &  \textbf{87.51}    &  \textbf{87.81}\\
    \hline
    Average &  89.28    &  86.89    &  86.97 \\ 
    Ensemble &  91.30 $\uparrow$    &   90.20 $\uparrow$   &  90.21 $\uparrow$\\
    \hline
    \end{tabular}%
    \caption{The Accuracy (\%) for each sub-model on clean test data for AwA dataset.}
  \label{tab:3}%
    \vspace{-15pt}
\end{table}%

\section{Experiments and Results}
\subsection{Datasets}
\noindent \textbf{CUB.} The Caltech-UCSD Birds-200-2011 (CUB) \cite{wah2011caltech} dataset is designed for bird classification and includes 11,788 images from 200 different species. It provides 312 binary attributes, offering high-level semantic information. Following the work in \cite{lai2024cat}, we filter out 116 attributes as the final concepts. To enhance the clustering process, we modify the format of these attributes at the textual level.

\noindent \textbf{AwA.} The Animals with Attributes (AwA) \cite{xian2018zero} dataset contains 37,322 images across 50 animal categories, each annotated with 85 binary attributes. To improve clustering effectiveness, we modify the concepts. Since each original concept is represented by a single word, we use GPT-4 \cite{achiam2023gpt} to generate full sentences to replace them. 

See Appendix \ref{Dataset_details} for more details and examples about the modifications of CUB and AwA.

\subsection{Settings}
We state brief experiments settings here and put the details in Appendix \ref{experiment settings}. For concepts clustering, we apply k-means to divide the concepts into $m$ groups, then we follow the method we introduced in Section \ref{ConceptGuard} to construct $m$ sub-datasets and train $m$ models individually. For each sub-model, the settings, including learning rate, optimizer, learning scheduler, and other parameters, are identical, and the model architectures are also the same, except for the input dimensions for the final prediction, see Appendix \ref{experiment settings} for more details.

\subsection{Experiment Results}
\subsubsection{ConceptGuard v.s. Direct Training (DT)}
To evaluate the effectiveness of ConceptGuard, we first compared it with direct training (DT). The results are presented in Table \ref{table:1}. Specifically, we conducted attacks with a fixed injection rate $p$ and trigger size $|\mathbf{e}|$ (various on different datasets). In general, our method achieves a significant decrease of ASR across all datasets, demonstrating its defense efficacy.
Notably, our method does not compromise the performance of the original tasks; in fact, it still outperforms the baseline model ($\uparrow 1\%-2\%$), which lacks a certified guarantee. Specifically, the attacked models without any triggers activated maintain similar accuracy to their original counterparts, indicating the imperceptibility of the CAT. This finding underscores the ability of ConceptGuard to preserve model utility under normal conditions while effectively defending against imperceptible attacks.
In terms of certified guarantees, our method achieved an ASR reduction of over 70\%, with the maximum reduction reaching 85.52\% on the AwA dataset when attacked by CAT+. These results confirm that ConceptGuard maintains the model's utility in the absence of attacks and provides strong defense against imperceptible attacks. The effectiveness of our approach can be attributed to its ability to disrupt the original patterns of backdoor triggers, thereby reducing the likelihood of the model memorizing the backdoor. Additionally, by introducing concept-level protection mechanisms, ConceptGuard ensures that the model remains effective and secure without compromising its normal performance.
The effectiveness of our approach can be attributed to its ability to disrupt the original pattern of the backdoor triggers, thereby reducing the likelihood of the model memorizing the backdoor.

\begin{table}[H]
  \centering
    \begin{tabular}{c|cc|cc}
    \hline
     & \multicolumn{2}{c|}{CUB} & \multicolumn{2}{c}{AwA} \\
    \hline
       $m$  & \multicolumn{1}{l}{CG(CAT)} & \multicolumn{1}{l|}{CG(CAT+)} & \multicolumn{1}{l}{CG(CAT)} & \multicolumn{1}{l}{CG(CAT+)} \\
    1    & 44.66 & 89.68 & 48.24 & 63.81 \\
    3    & 30.78 & 42.75 & 28.84 & 57.56 \\
    4    & \underline{11.55} & \underline{17.16} & 48.77 & \underline{5.36} \\
    5    & 25.95 & \textbf{16.64} & 10.54 & 67.54 \\
    6    & 23.84 & 20.12 & 13.68 & 9.24 \\
    7    & 15.41 & 24.50 & 17.71 & 5.66 \\
    8    & 17.70 & 30.92 & 9.90 & 5.87 \\
    9    & 15.23 & 25.35 & \underline{7.49} & 9.96 \\
    10   & \textbf{10.22} & 19.33 & \textbf{3.73} & \textbf{5.15} \\
    \hline
    \end{tabular}%
    \caption{Attack Success Rate (ASR, \%) under varying numbers of clusters $m$. CG denotes ConceptGuard. Bold \textbf{values} highlight the best performance, while underlined \underline{values} indicate competitive performance. $m = 1$ refers to the ASR when ConceptGuard is not applied.}
  \label{tab: tab4}%
\end{table}%

\begin{figure}[htp]
\centering
\includegraphics[width=0.6\textwidth]{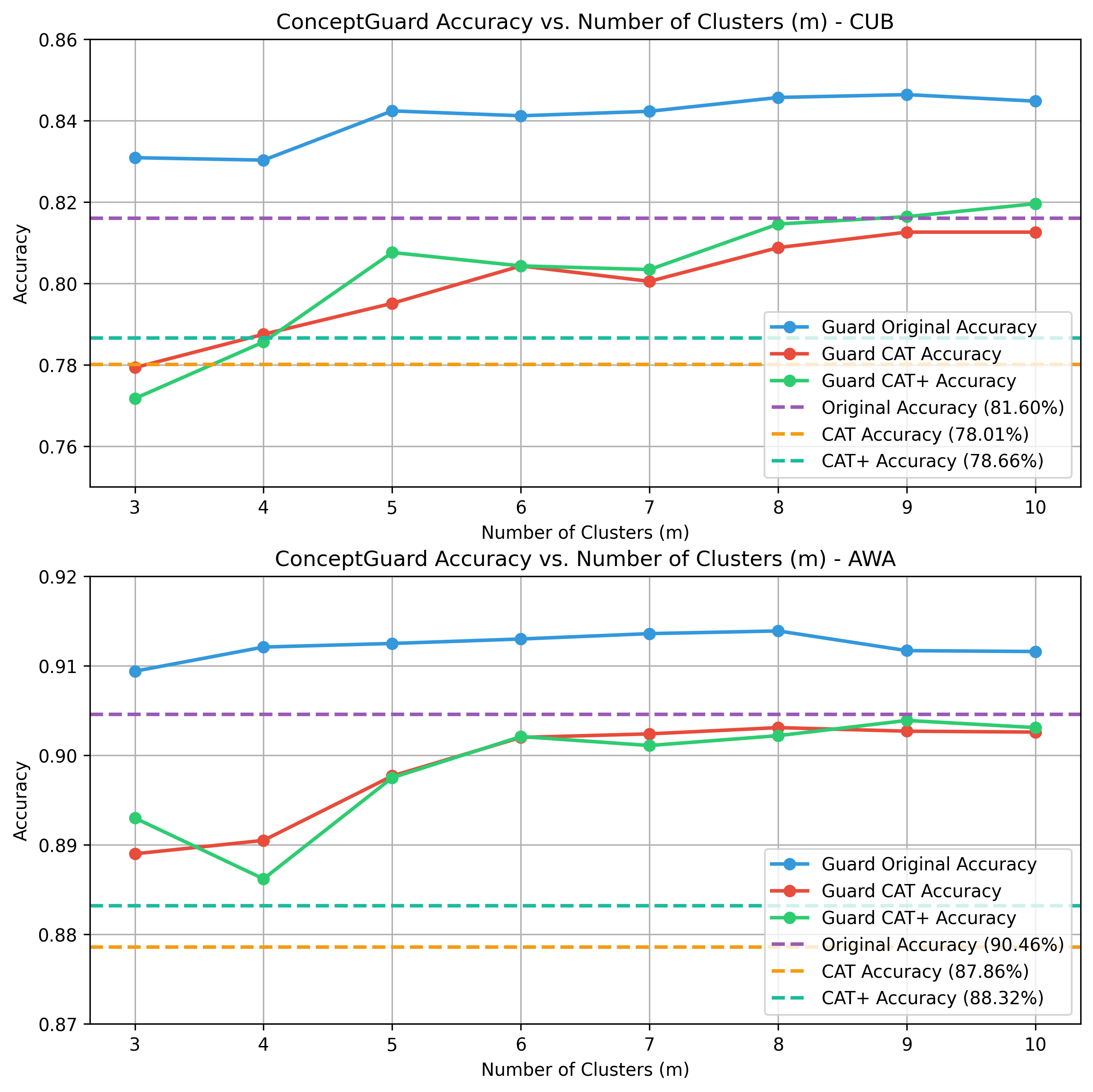}
\caption{The ConceptGuard Accuracy versus the number of Clusters $m$, the Guard Original Accuracy (blue lines) denotes to the accuracy when there is no attack, and Guard CAT\textbackslash CAT+ Accuracy (red lines\textbackslash
 green lines) denotes to the accuracy when CAT \textbackslash CAT+ is applied, }
 \vspace{-10pt}
\label{figure:conceptguard_accuracy_plot}
\end{figure}

\subsubsection{Individual model vs. ensemble model}
We investigated the individual model's accuracy and the ensembled accuracy, with the results presented in Table \ref{tab:2} and Table \ref{tab:3}. Overall, our ensemble model shows a significant improvement in accuracy compared to the individual accuracy of each base classifier in all scenarios, even outperforming the best-performing base classifier (base classifier 3). Additionally, there is a notable increase in accuracy compared to the average accuracy of the base classifiers. This improvement is attributed to our ConceptGuard framework, which effectively filters out the misclassifications of the few base classifiers during testing, thereby providing the ensemble model with a higher accuracy. The source of this accuracy improvement aligns with the original motivation of ConceptGuard: it mitigates the errors of the base classifiers, leading to a higher ensemble accuracy, rather than simply relying on a straightforward aggregation of the classifiers. By leveraging the diversity and robustness of the ensemble, ConceptGuard ensures that the final predictions are more accurate and reliable, demonstrating its effectiveness in enhancing the performance of ensemble models.

\subsubsection{The impact of number of clusters}
We further evaluate ConceptGuard with different settings of the number of clusters $m$, the experimental results are shown in Table \ref{tab: tab4} and Figure \ref{figure:conceptguard_accuracy_plot}. We observe that as $m$ increases, the Attack Success Rate (ASR) generally decreases, indicating that dividing the dataset into more groups helps mitigate the backdoor effect more effectively. Meanwhile, the Accuracy generally increases as the increase of $m$, and even exceed the performance before attack, for example, when $m$ is set to 10, the Accuracy for ConceptGuard(CAT+) exceeds the original accuracy. However, choosing an excessively large $m$ is not practical, as the computational cost increases approximately linearly with $m$.

\section{Limitation}
Despite the significant contributions of ConceptGuard in enhancing the security and trustworthiness of CBMs against concept-level backdoor attacks, there remain several limitations to consider. Firstly, while ConceptGuard demonstrates effectiveness in defending against backdoor attacks within a certain trigger size threshold, the exact boundary of this threshold may vary across different datasets and application domains, necessitating further research to generalize its applicability. Secondly, the computational cost associated with the multi-stage approach, including concept clustering and the training of multiple sub-models, poses a challenge for real-time or resource-constrained environments. Although increasing the number of clusters can improve both the attack success rate and overall accuracy, there is a trade-off with computational efficiency, highlighting the need for optimized algorithms that balance performance and resource utilization. Additionally, ConceptGuard's reliance on concept clustering based on text distance measurements assumes that relevant concepts are semantically close, which may not always hold true, especially in complex or highly specialized domains where the semantic relationships between concepts are less clear.

\section{Conclusion}

ConceptGuard represents a significant advancement in the field of secure and explainable artificial intelligence, specifically addressing the critical issue of concept-level backdoor attacks in CBMs. By introducing a novel defense framework that leverages concept clustering and a voting mechanism among classifiers trained on different concept subgroups, ConceptGuard not only mitigates the risks posed by such attacks but also maintains the high performance and interpretability of CBMs. Theoretical analyses and empirical evaluations have demonstrated the effectiveness of ConceptGuard in enhancing the robustness of CBMs, making them more reliable and trustworthy for deployment in high-stakes applications such as medical diagnostics and financial services. While the current work has laid a solid foundation for defending against concept-level backdoors, future research should aim to address the identified limitations, such as optimizing computational efficiency, exploring alternative clustering methods, and expanding the scope of protection to encompass a broader range of potential threats.

\bibliography{references}

\appendix
\section{Proof of Theorem \ref{theo:1}}
\label{Appendix A}

\begin{proof}
Our ConceptGuard clusters the concept components into groups within the concept vector first. After grouping, each concept appears exclusively in one group, implying that a backdoor trigger can corrupt $|\mathbf{e}|$ group at most. When the trigger size is less than $t$, i.e., $|e| \leq t$, at most $t$ groups are corrupted. Therefore, we can derive the dual bounds.
\begin{equation}
\label{eq9}
    N_l - |\mathbf{e}| \leq N_l ' \leq N_l + |\mathbf{e}|, l = 1, 2, \cdots, L,
\end{equation}
where $N_l '$ is the number of the base classifiers that predict the label $l$ built upon the dataset $\mathcal{D}(T_\mathbf{e})$. We mentioned that $y$ is the final predicted label of ensemble classifier for $\mathbf{c}_{test}$ with no attack, i.e., $y = f(\mathbf{c}_{test};\mathcal{D}(\phi))$. From Equation \ref{eq4}, the ensemble classifier built upon $\mathcal{D}(T_{\mathbf{e}})$ keep the prediction $y$ unchanged if the condition is satisfied: $N_y ' \geq \max_{l\ne y}(N_l' + \mathbb{I}(y>l))$. From Equation \ref{eq3}\&\ref{eq4}, we conclude that
$N_y - |\mathbf{e}| \leq N_y '$, $\max_{l\ne y}(N_l' + \mathbb{I}(y>l)) \leq \max_{l\ne y}(N_l + \mathbb{I}(y>l) + |\mathbf{e}|)$. Therefore, our primary objective is to ensure that: $N_y - |\mathbf{e}| \geq \max_{l\ne y}(N_l + \mathbb{I}(y>l) + |\mathbf{e}|)$. It makes the ensemble classifier predict the label $y$ still. Equivalently, $f(\mathbf{c}_{test}';\mathcal{D}(T_{\mathbf{e}})) = y$ if:
\begin{equation}
\label{eq10}
    |\mathbf{e}| \leq \frac{N_y - \max_{l\ne y}(N_l + \mathbb{I}(y>l))}{2}.
\end{equation}
\end{proof}

\section{Attack Formulation and Details}
\label{Appendix B}
In our attack formulation, we first recall our motivation and give the following definition:
\begin{equation}
\begin{split}
    \max_{\mathcal{D}^j \in \mathcal{D}} \Sigma_{\mathcal{D}^j}(f(\mathbf{c}_j)-f(\mathbf{c}_j \oplus \mathbf{\tilde{c}}))  \\
    \mathrm{s.t.} \quad f(\mathbf{c}_j) = f(\mathbf{c}_j \oplus \mathbf{\tilde{c}})) = y_{tc},
\end{split}
\end{equation}
where $\mathcal{D}^j$ represents each data, $\mathcal{D}$ represents the dataset, $y$ represents the clean-label which we chose to attack, and $\mathbf{c} \oplus \mathbf{\tilde{c}}$ represents the Data-Driven Attack Pattern we defined, it means we may change the concepts values in the concepts which we filtered out while we keep the values unchanged in other concepts.

The objective function during an attack is to maximize the discrepancy in predictions. Nevertheless, if the trigger is absent from the concept vector, the predicted label will remain unchanged. Crucially, the objective function adheres to two constraints: the first ensures that the model's predictions for the original dataset remain unchanged, while the second mandates that the perturbation remains imperceptible.

In concept-level backdoor attacks, the core mechanism consists of two steps: concepts filter for the attack and inject poisonous data into the training dataset to embed the backdoor trigger. Below, we will discuss how these two steps influence concept-level backdoor attacks.
\paragraph{Concept Filter.} 
Given a concept vector $\mathbf{c} \in \mathbb{R}^{L}$, where each element $c^k$ encapsulates a distinct concept, we endeavor to filter out the most irrelevant concepts to generate perturbations in the context of the attack. Let $\mathbf{e}$ represent a set of concepts, termed \textit{trigger concepts}, employed in the formulation of the backdoor trigger, such that $\mathbf{e} = \{ c^{k_1}, c^{k_2}, \ldots, c^{k_{|\mathbf{e}|}} \}$. Here, $|\mathbf{e}|$ denotes the cardinality of the concept set $\mathbf{e}$ and is defined as the \textit{trigger size}. During the concept filtering process, we systematically identify and eliminate the $|\mathbf{e}|$ concepts that exhibit the least relevance to the prediction task. The assessment of concept irrelevance is conducted through the utilization of the classifier $f$. Ultimately, we extract $|\mathbf{e}|$ concepts to facilitate the attack in subsequent stages. In this filtering process, the potency of the backdoor attack is inherently tied to the trigger size $|\mathbf{e}|$. We denote the resultant filtered concepts as $\tilde{\mathbf{c}}$.

\paragraph{Data-Driven Attack Pattern.}
In CBM tasks, most datasets have sparser concept levels in the concept bottleneck layer. It means in a concept vector $\textbf{c}$, most concept levels $c^k$ are positive (negative) rather than negative (positive). When $c^k = 0$, $c^k$ is negative, and when $c^k = 1$, $c^k$ is positive. Different datasets have different levels of sparsity. While we attacking the positive datasets, we set the filtered concepts $\mathbf{\tilde{c}}$ into $\mathbf{0}$, i.e., $\mathbf{\tilde{c}}:= {\left\{ 0,0,\cdots, 0 \right\} }, |\tilde{\mathbf{c}}|=|\mathbf{e}|$. There will be an opposite situation in negative datasets, we set the filtered concepts $\mathbf{\tilde{c}}$ into $\mathbf{1}$, i.e., $\mathbf{\tilde{c}}:= {\left\{ 1,1,\cdots, 1 \right\}, |\tilde{\mathbf{c}}|=|\mathbf{e}| }$.

By setting the filtered concepts $\mathbf{\tilde{c}}$ accordingly, the attack aims to introduce perturbations that are subtle yet impactful, disrupting the model's predictions without being easily detected. 
\paragraph{CAT+ \cite{lai2024cat}.} Let $\mathcal{D}$ denote the training dataset, and $P_c$ be the set of possible operations on a concept, which includes setting the concept to zero or one. 
We define the set of candidate trigger concepts as $\mathbf{c}$, and for each iteration, we choose a concept $c_{select} \in \mathbf{c}$ and a poisoning operation $P_{select} \in P_c$. The objective is to maximize the deviation in the label distribution after applying the trigger. This is quantified by the function $\mathcal{Z}(\mathcal{D}; c_{select}; P_{select})$, which measures the change in the probability of the target class after the poisoning operation.

The function $\mathcal{Z}(\cdot)$ is defined as follows:

(i) Let $n$ be the total number of training samples, and $n_{target}$ be the number of samples from the target class. The initial probability of the target class is $p_0 = n_{target} / n$.

(ii) Given a modified dataset $c_a = \mathcal{D}; c_{select}; P_{select}$, we calculate the conditional probability of the target class given $c_a$ as $p^{(target|c_a)} = \mathbb{H}({target}(c_a)) / \mathbb{H}(c_a)$, where $\mathbb{H}$ is a function that computes the overall distribution of labels in the dataset.

 (iii) The Z-score for $c_a$ is defined as:
 \begin{align*}
    \mathcal{Z}(c_a) &= \mathcal{Z}(c_{select}, P_{select}) \\
    &= \left[ p^{(target|c_a)} - p_0 \right] / \left[ \frac{p_0(1-p_0)}{p^{(target|c_a)}} \right]
 \end{align*}

  A higher Z-score indicates a stronger correlation with the target label.

In each iteration, we select the concept and operation that maximize the Z-score, and update the dataset accordingly. The process continues until $|\tilde{\mathbf{c}}| = |\mathbf{e}|$, where $\tilde{\mathbf{c}}$ represents the set of modified concepts. Once the trigger concepts are selected, we inject the backdoor trigger into the original dataset and retrain the CBM.

\section{Pseudo Algorithm}
\begin{algorithm}[h]
\caption{ConceptGuard Defense Algorithm}
\begin{algorithmic}[1]
\State \textbf{Input:} The concept vector $\mathbf{c}_{test}$, the training dataset $\mathcal{D}$, the backdoor trigger size $|\mathbf{e}|$, the number of sub-datasets $m$
\State \textbf{Output:} Improved certified accuracy for the ensemble classifier

\State Compute $N_l$ for each class label $l$ from the training dataset $\mathcal{D}(\phi)$
\State Compute the predicted label $y$ from the ensemble classifier on the clean data $\mathcal{D}(\phi)$
\State Calculate the maximum number of classifiers for each class $l$:
\[
    \max_{l\ne y}(N_l + \mathbb{I}(y>l))
\]

\For{each subset $\mathcal{G}^j(\mathbf{c}_{test})$ for $j = 1, \dots, m$}
    \State Compute the number of base classifiers for each class $l$ for the subset $\mathcal{G}^j(\mathbf{c}_{test})$
    \If{$f(\mathcal{G}^j(\mathbf{c}_{test}); \mathcal{D}(\phi)) = l$}
        \State Increase the counter for the predicted class $l$
    \EndIf
\EndFor

\For{each possible corrupted group index set $\mathcal{J} \subseteq \{1, 2, \dots, m\}$ with $|\mathcal{J}| = |\mathbf{e}|$}
    \State Compute the updated prediction $N_l'$ after corrupting the groups in $\mathcal{J}$:
    \State \quad $N_l' \leq N_l + \sum_{j \in \mathcal{J}} \mathbb{I}(f(\mathcal{G}^j(\mathbf{c}_{test}); \mathcal{D}(\phi)) \neq l)$
    \State \quad $N_l' \geq N_l - \sum_{j \in \mathcal{J}} \mathbb{I}(f(\mathcal{G}^j(\mathbf{c}_{test}); \mathcal{D}(\phi)) = l)$
    \State Check if the inequality for maintaining label $y$ is satisfied:
    \begin{align*}
            &N_y - \sum_{j \in \mathcal{J}} \mathbb{I}(f(\mathcal{G}^j(\mathbf{c}_{test}); \mathcal{D}(\phi)) = y) \geq \\ \nonumber
            &\max_{l \neq y}(N_l + \mathbb{I}(y>l) + \sum_{j \in \mathcal{J}} \mathbb{I}(f(\mathcal{G}^j(\mathbf{c}_{test}); \mathcal{D}(\phi)) \neq l))
    \end{align*}
    \If{the condition holds}
        \State \textbf{Accept} this subset as contributing to the certified accuracy
    \Else
        \State \textbf{Reject} this subset
    \EndIf
\EndFor

\State Compute the final certified accuracy by taking the majority vote over all subsets
\State \textbf{Output:} Certified accuracy
\label{al_gard}
\end{algorithmic}
\end{algorithm}

See in Algorithm \ref{alg:joint certification}.
\begin{algorithm}
\caption{Joint Certification}
\label{alg:joint certification}
\begin{algorithmic}[1]
\Require $m$ base classifiers $f^j$ $(j = 1, 2, \dots, m)$, a clustering function $\mathcal{F}$, a clean test dataset $\mathcal{D}_{\text{test}}$, maximum trigger size $t$.
\State $Accu \leftarrow 1$
\For{$\mathcal{J}$ in $\text{Combination}(m, t)$}
    \State $Iaccu \leftarrow 0$
    \For{($\mathbf{x}_{\text{test}}, \mathbf{c}_{\text{test}}, y_{\text{test}}) \in \mathcal{D}_{\text{test}}$}
        \State $\mathcal{G}^j(\mathbf{c}_{\text{test}}) \leftarrow \text{ConceptClustering}(\mathbf{c}_{\text{test}}, m, \mathcal{F})$, $j = 1, 2, \dots, m$
        \State $N_l \leftarrow \sum_{j=1}^m \mathbb{I}(f^j(\mathcal{G}^j(\mathbf{c}_{\text{test}}); \mathcal{D}(\phi)) = l), \quad l = 1, 2, \dots, C$
        \State $y \leftarrow \arg\max_{l=1,2,\dots,L} N_l$
        \State $U \leftarrow N_y - \sum_{j \in \mathcal{J}} \mathbb{I}(f^j(\mathcal{G}^j(\mathbf{c}_{\text{test}}); \mathcal{D}(\phi)) = y)$
        \State $L \leftarrow \max_{l\ne y}(N_l + \mathbb{I}(y>l)+ \sum_{j\in \mathcal{J}}\mathbb{I}\left( f\left( \mathcal{G}^j\left( \mathbf{c}_{test} \right);\mathcal{D}\left( \phi \right) \right) \neq l \right))$
        \State $Iaccu \leftarrow Iaccu + \mathbb{I}(U \geq L) \mathbb{I}(y_{\text{test}} = y)$
    \EndFor
    \State $Accu \leftarrow \min(Accu, Iaccu) / |\mathcal{D}_\text{test}|$
\EndFor
\State \textbf{return} $Accu$
\end{algorithmic}
\end{algorithm}

\section{Proof of Theorem \ref{theo:2}}
\label{proof2}
\begin{proof}
    Following the same notation, we use $\mathcal{J}$ to denote the set of indices of $|\mathbf{e}|$ groups which are potentially corrupted. When the groups with their indices in $\mathcal{J}$ are corrupted, the lower and upper bounds for $N_l'$ are derived as below:
    \begin{equation*}
        N_l-\sum_{j\in \mathcal{J}}{\mathbb{I}\left( f\left( \mathcal{G}^j\left( \mathbf{c}_{test} \right) ;\mathcal{D}\left( \phi \right) \right) =l \right) \leq N_{l}^{'}},
    \end{equation*}
    \begin{equation*}
        N_{l}^{'} \leq N_l+\sum_{j\in \mathcal{J}}{\mathbb{I}\left( f\left( \mathcal{G}^j\left( \mathbf{c}_{test} \right) ;\mathcal{D}\left( \phi \right) \right) \ne l \right)}.
    \end{equation*}
    Based on equation \ref{eq10} and Appendix \ref{Appendix A}, the ensemble classifier $f$ built upon $\mathcal{D}(\phi)$ still predicts $y$ for $\mathbf{c}_{test}$ if we have $N_y-\sum_{j\in \mathcal{J}}\mathbb{I}\left( f\left( \mathcal{G}^j\left( \mathbf{c}_{test} \right);\mathcal{D}\left( \phi \right) \right) =l \right) \geq \max_{l\neq y}(N_y+\sum_{j\in \mathcal{J}}\mathbb{I}\left( f\left( \mathcal{G}^j\left( \mathbf{c}_{test} \right);\mathcal{D}\left( \phi \right) \right) \neq l \right))$. Based on Equations \ref{eq15} and \ref{eq16}, the ensemble classifier $f$ built upon $\mathcal{D}(T_{\mathbf{e}})$ still predicts $y$ for $\mathbf{c}_{\text{test}}$ if we have:
    \begin{align*}
        &N_y-\sum_{j\in \mathcal{J}}{\mathbb{I}\left( f\left( \mathcal{G}^j\left( \mathbf{c}_{test} \right) ;\mathcal{D}\left( \phi \right) \right) =y \right)} \geq \\ 
        &\max_{l\ne y}(N_l + \mathbb{I}(y>l)+ \sum_{j\in \mathcal{J}}\mathbb{I}\left( f\left( \mathcal{G}^j\left( \mathbf{c}_{test} \right);\mathcal{D}\left( \phi \right) \right) \neq l \right)).
    \end{align*}
\end{proof}

\section{Dataset Details}
\label{Dataset_details}
Here we give some examples of the modified concepts for the datasets, see Table \ref{tab:3}. For CUB dataset, we just change the format of the concepts. For AwA dataset, we use GPT-4 to generate one full sentence based on the single word concept through the following prompt, "Here are the concepts for an animal classification task, please transfer each concept into one complete sentence."

\begin{table}[htbp]
  \centering
  \aboverulesep = 0pt
  \belowrulesep = 0pt
    \begin{tabularx}{\linewidth}{c|X|X}
    \toprule
        Dataset &  \multicolumn{1}{c|}{Original concept} & \multicolumn{1}{c}{Rewrite concept} \\
    \midrule
    \multirow{2}{*}[-3.0ex]{CUB} & has\_bill\_\newline shape::dagger  & Bill shape is dagger \\
\cmidrule{2-3}         &  has\_eye\_\newline color::black    & Eye color is black \\
    \midrule
    \multirow{2}{*}{AWA} &  meat  & The animal consumes meat as part of its diet \\
\cmidrule{2-3}         &  Forest    & The animal inhabits forests  \\
    \bottomrule
    \end{tabularx}%
    \caption{Examples of the rewrite concepts for both datasets}
  \label{tab:tab5}%
\end{table}%

\section{Experiment Settings}
\label{experiment settings}
We conducted all of our experiments on a NVIDIA A800 GPU. The hyper-parameters for each dataset and for each sub-model remained consistent, regardless of whether an attack was present.

In this work, we set the training model in CBMs as joint bottleneck training, which minimizes the weighted loss function:
\begin{equation}
\label{eq19}
\begin{split}
    \hat{f},\hat{g} &= \arg \min_{f,g}\Sigma_{i}[L_y(f(g(x^{(i)}));y^{(i)}) \\
    &+\Sigma_{j}\lambda L_{c^j}(g(x^{(i)});c^{(i)})],
\end{split}
\end{equation}
where $\lambda > 0$, and loss function $L_y: \mathbb{R} \times \mathbb{R} \rightarrow \mathbb{R}_{+}$ measure the discrepancy between predicted and true targets, loss function $L_{c^j}: \mathbb{R} \times \mathbb{R} \rightarrow \mathbb{R}_{+}$  measures the discrepancy between the predicted and true $j$-th concept.

For model architecture, We use ResNet-50 \cite{he2016deep} as the Encoder to map the image to concept space, and then an MLP with one hidden layer, whose hidden size 512 is followed to make the final prediction. For one sub-model, the input dimension for the MLP will be the number of concepts in the corresponding group.

During training, we use a batch size of 64 and a learning rate of 1e-4. The Adam optimizer is applied with a weight decay of 5e-5, alongside an exponential learning scheduler with $\gamma$ = 0.95. The concept loss weight $\lambda$ in Equation \ref{eq19} is set to 0.5. For image augmentations, we follow the approach of \cite{lai2024cat}. Each training image is augmented using random color jittering, random horizontal flips, and random cropping to a resolution of 256. During inference, the original image is center-cropped and resized to 256. For AwA dataset, We use a batch size of 128, while all other hyper-parameters and image augmentations remain consistent with those used for the CUB dataset.

\section{More Experiments about other Setting}
\label{sec:more1}

\begin{table*}[]
  \resizebox{1 \columnwidth}{!}{%
\begin{tabular}{ccccccccccccc}
\hline
\multirow{3}{*}{} & \multicolumn{4}{c}{2\%}                                  & \multicolumn{4}{c}{}                                     & \multicolumn{4}{c}{10\%}                                 \\ \cline{2-13} 
                  & \multicolumn{2}{c}{CAT}  & \multicolumn{2}{c}{CAT (CG)}  & \multicolumn{2}{c}{CAT}  & \multicolumn{2}{c}{CAT (CG)}  & \multicolumn{2}{c}{CAT}  & \multicolumn{2}{c}{CAT (CG)}  \\ \cline{2-13} 
                  & ACC(\%)     & ASR(\%)    & ACC(\%)        & ASR(\%)      & ACC(\%)     & ASR(\%)    & ACC(\%)        & ASR(\%)      & ACC(\%)     & ASR(\%)    & ACC(\%)        & ASR(\%)      \\ \hline
12                & 80.72       & 13.97      & 81.77          & 12.3         & 78.7        & 24.05      & 80.36          & 11.45        & 74.66       & 38.08      & 75.22          & 27.55        \\
15                & 80.22       & 11.94      & 82.05          & 10.01        & 78.08       & 22.97      & 80.01          & 10.91        & 74.02       & 38.72      & 74.53          & 30.53        \\
17                & 80.31       & 25.07      & 82.15          & 3.94         & 78.86       & 46.69      & 79.75          & 16.66        & 73.27       & 61.28      & 74.77          & 20.32        \\
20                & 80.2        & 30.33      & 81.93          & 15.29        & 78.01       & 44.66      & 78.75          & 11.55        & 73.85       & 60.48      & 76.15          & 38.5         \\
23                & 80.31       & 20.42      & 82.21          & 23.28        & 78.06       & 32.48      & 80.26          & 25.56        & 72.63       & 47.02      & 75.72          & 48.2         \\ \hline
\multirow{2}{*}{} & \multicolumn{2}{c}{CAT+} & \multicolumn{2}{c}{CAT+ (CG)} & \multicolumn{2}{c}{CAT+} & \multicolumn{2}{c}{CAT+ (CG)} & \multicolumn{2}{c}{CAT+} & \multicolumn{2}{c}{CAT+ (CG)} \\ \cline{2-13} 
                  & ACC(\%)     & ASR(\%)    & ACC(\%)        & ASR(\%)      & ACC(\%)     & ASR(\%)    & ACC(\%)        & ASR(\%)      & ACC(\%)     & ASR(\%)    & ACC(\%)        & ASR(\%)      \\ \hline
12                & 80.46       & 38.88      & 82.34          & 21.69        & 79.05       & 57.6       & 80.07          & 34.49        & 75.11       & 68.49      & 76.35          & 44.69        \\
15                & 80.26       & 31.97      & 82.27          & 14.68        & 79.34       & 41.64      & 79.79          & 38.46        & 74.78       & 47.29      & 74.87          & 43.84        \\
17                & 79.84       & 49.22      & 82.29          & 31.94        & 78.48       & 58.31      & 80.45          & 38.88        & 73.85       & 71.2       & 75.77          & 31.21        \\
20                & 81.27       & 72.36      & 82.36          & 11.78        & 78.86       & 89.68      & 78.56          & 17.16        & 74.34       & 92.4       & 75.89          & 34.21        \\
23                & 79.58       & 87.4       & 81.88          & 22.38        & 77.65       & 91.71      & 79.1           & 42.37        & 73.78       & 86.9       & 76.87          & 40.49        \\ \hline
\end{tabular}
}
  \centering
  \caption{Performance Comparison of ConceptGuard under Different Injection Rates and Trigger Sizes. This table presents ACC and ASR of ConceptGuard under different injection rates (2\% and 10\%) and trigger sizes (12, 15, 17, 20, 23). The results are shown for both the unprotected models (CAT/CAT+) and the models protected by ConceptGuard (CAT (CG)/CAT+ (CG)).}
\label{tab:ij_ts}
\end{table*}

\begin{table*}[htp]
  \centering
\begin{tabular}{ccccccccc}
\hline
\multirow{2}{*}{Target Class} & \multicolumn{2}{c}{CAT} & \multicolumn{2}{c}{CAT (CG)} & \multicolumn{2}{c}{CAT+} & \multicolumn{2}{c}{CAT+ (CG)} \\ \cline{2-9} 
                              & ACC(\%)    & ASR(\%)    & ACC(\%)       & ASR(\%)      & ACC(\%)     & ASR(\%)    & ACC(\%)       & ASR(\%)       \\ \hline
8                             & 75.06      & 74.24      & 79.72         & 21.93        & 75.56       & 52.63      & 79.96         & 11.19         \\
16                            & 75.16      & 40.91      & 80.36         & 8.74         & 75.72       & 68.81      & 80.13         & 30.52         \\
24                            & 74.37      & 35.48      & 80.03         & 17.24        & 74.91       & 54.48      & 81            & 13.38         \\
32                            & 74.7       & 37.68      & 80.62         & 25.31        & 75.58       & 17.87      & 79.53         & 9.52          \\
40                            & 74.46      & 42.35      & 79.53         & 7.93         & 74.96       & 23.77      & 80.32         & 12.49         \\
48                            & 75.09      & 49.77      & 80.74         & 4.49         & 75.73       & 95.11      & 80.91         & 22.14         \\
56                            & 75.22      & 70.99      & 81.07         & 15.93        & 75.23       & 57.36      & 80.24         & 15.25         \\
64                            & 74.85      & 43.63      & 80.41         & 10.27        & 74.58       & 84.27      & 80.46         & 19.29         \\
72                            & 74.99      & 47.99      & 80.67         & 10.9         & 75.34       & 59.06      & 80.62         & 10.11         \\
80                            & 75.03      & 62.51      & 80.89         & 10.83        & 75.61       & 75.83      & 80.45         & 11.85         \\
88                            & 74.75      & 51.13      & 79.94         & 21.93        & 74.66       & 72.71      & 80.1          & 25.02         \\
96                            & 74.82      & 17.73      & 80.67         & 8.99         & 75.2        & 62.16      & 80.91         & 14.78         \\
104                           & 74.84      & 53.02      & 80.65         & 12.8         & 74.92       & 40.42      & 80            & 16.03         \\ \hline
\end{tabular}
\caption{Performance of ConceptGuard on Different Target Classes. This table presents ACC and ASR of ConceptGuard for different target classes under an injection rate of 5\% and a trigger size of 20, using the CUB dataset. The results are shown for both the unprotected models (CAT/CAT+) and the models protected by ConceptGuard (CAT (CG)/CAT+ (CG)).}
  \centering
\label{app:tc}
\end{table*}

\begin{table}[htp]
  \centering
    \begin{tabular}{c|cc}
    \hline
     & \multicolumn{2}{c}{CUB} \\
    \hline
       $m$  & \multicolumn{1}{l}{CG(CAT)} & \multicolumn{1}{l}{CG(CAT+)} \\
    1    & 60.48 & 92.40 \\
    3    & 41.67 & 57.46 \\
    4    & 38.50 & 34.21 \\
    5    & 29.55 & \textbf{28.24} \\
    6    & 29.55 & \underline{28.78}  \\
    7    & \textbf{23.70} & 49.01 \\
    8    & 35.27 & 51.72 \\
    9    & \underline{24.13} & 31.35 \\
    10   & 25.87 & 43.93 \\
    \hline
    \end{tabular}%
    \caption{Attack Success Rate (ASR, \%) under varying numbers of clusters $m$, the injection rate is 10\%. CG denotes ConceptGuard. Bold \textbf{values} highlight the best performance, while underlined \underline{values} indicate competitive performance. $m = 1$ refers to the ASR when ConceptGuard is not applied. }
  \label{tab: num_clus:ir=0.1}%
\end{table}%

\subsection{Injection Rate and Trigger Size}

Table \ref{tab:ij_ts} demonstrates the performance of ConceptGuard under varying injection rates and trigger sizes. At a 2\% injection rate, ConceptGuard significantly reduces the ASR while maintaining or slightly improving the ACC. For example, when the trigger size is 12, the ASR for CAT drops from 13.97\% to 12.3\%, and the ACC improves from 80.72\% to 81.77\%. Similarly, for CAT+, the ASR decreases from 38.88\% to 21.69\%, and the ACC increases from 80.46\% to 82.34\%.

At a 10\% injection rate, the attack success rate generally increases, but ConceptGuard still effectively reduces the ASR. For instance, with a trigger size of 12, the ASR for CAT drops from 38.08\% to 27.55\%, and the ACC improves from 74.66\% to 75.22\%. For CAT+, the ASR decreases from 68.49\% to 44.69\%, and the ACC increases from 75.11\% to 76.35\%.

As the trigger size increases, the effectiveness of ConceptGuard remains robust. For smaller trigger sizes (12, 15), ConceptGuard significantly reduces the ASR and maintains high ACC. For example, with a trigger size of 12, the ASR for CAT drops from 13.97\% to 12.3\%, and the ACC improves from 80.72\% to 81.77\%. For CAT+, the ASR decreases from 38.88\% to 21.69\%, and the ACC increases from 80.46\% to 82.34\%.

For larger trigger sizes (17, 20, 23), ConceptGuard continues to perform well, although the ASR increases. For instance, with a trigger size of 20, the ASR for CAT drops from 44.66\% to 11.55\%, and the ACC improves from 78.01\% to 78.75\%. For CAT+, the ASR decreases from 89.68\% to 17.16\%, and the ACC increases from 78.86\% to 78.56\%.

ConceptGuard effectively reduces the ASR and maintains or improves the ACC under various injection rates and trigger sizes. This robust performance highlights the effectiveness of ConceptGuard in protecting models against concept-level backdoor attacks, thereby enhancing the security and trustworthiness of the models.

\subsection{Target Class}

Table \ref{app:tc} demonstrates the performance of ConceptGuard across various target classes under a fixed injection rate of 5\% and a trigger size of 20, using the CUB dataset. For the unprotected models (CAT/CAT+), ASR varies significantly across different target classes. For example, for target class 8, the ASR for CAT is 74.24\%, which is substantially reduced to 21.93\% with ConceptGuard (CAT (CG)). Similarly, for target class 48, the ASR for CAT+ is 95.11\%, which is reduced to 22.14\% with ConceptGuard (CAT+ (CG)).

Across all target classes, ConceptGuard consistently improves ACC while significantly reducing ASR. For instance, for target class 16, the ACC for CAT increases from 75.16\% to 80.36\% with ConceptGuard, and the ASR drops from 40.91\% to 8.74\%. For target class 72, the ACC for CAT+ increases from 75.34\% to 80.62\% with ConceptGuard, and the ASR drops from 59.06\% to 10.11\%.

These results highlight the robustness of ConceptGuard in defending against concept-level backdoor attacks across different target classes. Despite variations in the target classes, ConceptGuard maintains its effectiveness in reducing the ASR and improving the ACC, thereby enhancing the overall security and reliability of the models. This consistent performance underscores the practical value of ConceptGuard in real-world applications where diverse and targeted attacks are a significant concern.

\subsection{The Impact of Number of Clusters in Other Attack Setting }

Table \ref{tab: num_clus:ir=0.1} illustrates the attack success rates (ASR, \%) under varying cluster numbers $m$ for CG(CAT) and CG(CAT+), with a 10\% injection rate. Both methods significantly reduce ASR compared to the experiment without defense ($m=1$), with CG(CAT) showing consistent improvement as $m$ increases and CG(CAT+) achieving optimal performance at moderate cluster numbers.

\end{document}